\documentclass[conference]{IEEEtran}
\usepackage{times}
\usepackage[dvipsnames]{color, colortbl, xcolor}
\usepackage{multicol}
\usepackage[bookmarks=true,hidelinks]{hyperref}
\usepackage{amsfonts}
\usepackage{amsmath}
\usepackage{amsthm}
\usepackage{dsfont}
\usepackage{mathtools}
\usepackage{multicol}
\usepackage[pdftex]{graphicx}   
\usepackage{graphicx}
\usepackage{amssymb}

\usepackage{rotating}
\usepackage{makecell, multirow}

\usepackage[skip=1ex]{caption}

\usepackage[capitalize,noabbrev,nameinlink]{cleveref}
\crefformat{equation}{(#2#1#3)}
\usepackage[acronyms, shortcuts]{glossaries}
\usepackage[sort,numbers]{natbib}
\usepackage{cases}
\usepackage{xspace}
\usepackage{booktabs}
\usepackage{siunitx}
\usepackage{float}

\usepackage{subfigure}
\glsdisablehyper

\newacronym{ioc}{IOC}{inverse optimal control}
\newacronym{lqr}{LQR}{linear-quadratic regulator}
\newacronym{kkt}{KKT}{Karush–Kuhn–Tucker}
\newacronym{irl}{IRL}{inverse reinforcement learning}
\newacronym{mle}{MLE}{maximum likelihood estimation}
\newacronym[longplural=open-loop Nash equilibria,plural=OLNE]{olne}{OLNE}{open-loop Nash equilibrium}
\newacronym[longplural={partially observable Markov decision processes}]{pomdp}{POMDP}{partially observable Markov decision process}
\newacronym{svo}{SVO}{social value orientation}
\newacronym{ukf}{UKF}{unscented Kalman filter}
\newacronym{ibr}{IBR}{iterated best response}
\newacronym{awgn}{AWGN}{additive white Gaussian noise}
\newacronym{iqr}{IQR}{interquartile range}

\hyphenation{ALGAMES}
\hyphenation{off-line}
\hyphenation{on-line}

\newcommand{\ie}{i.e.\@\xspace}

\newcommand{\cf}{c.f.\@\xspace}

\newcommand{\mbb}{\mathbb}
\newcommand{\mc}{\mathcal}

\newcommand{\R}{\mbb{R}}

\newcommand{\state}{x}
\newcommand{\bx}{{\state}}
\newcommand{\control}{u}
\newcommand{\bu}{{\control}}

\newcommand{\example}[1]%
{
\vspace{0.15cm}
\noindent \textit{\textbf{Running example:} #1}
\vspace{0.15cm}
}

\newtheorem{theorem}{Theorem}

\DeclareMathOperator*{\argmin}{arg\,min}


\DeclareGraphicsExtensions{.pdf,.png}  

\usepackage{ifthen}
\newboolean{include-notes}
\setboolean{include-notes}{true}

\makeatletter
\def\NAT@spacechar{~}
\makeatother


\usepackage{lipsum}


\newcommand{\makiescale}{0.5}
\newcommand{\makiepngscale}{0.19}

\begin{document}
\title{Learning Mixed Strategies in Trajectory Games}
\author{
\authorblockN{
Lasse Peters\authorrefmark{1}\quad
David Fridovich-Keil\authorrefmark{2}\quad
Laura Ferranti\authorrefmark{1}\quad
Cyrill Stachniss\authorrefmark{3}\quad
Javier Alonso-Mora\authorrefmark{1}\quad
Forrest Laine\authorrefmark{4}
}
\authorblockN{\\
\authorrefmark{1}
Delft University of Technology, Netherlands
\quad
\authorrefmark{2}
University of Texas at Austin, USA\\
\authorrefmark{3}
University of Bonn, Germany
\quad
\authorrefmark{4}
Vanderbilt University, USA
}
\authorblockA{
{
\texttt{
\{
\href{mailto://l.peters@tudelft.nl}{l.peters},
\href{mailto://l.ferranti@tudelft.nl}{l.ferranti},
\href{mailto://j.alonsomora@tudelft.nl}{j.alonsomora}
\}@tudelft.nl,}}\\
{\texttt{\href{mailto://dfk@utexas.edu}{dfk@utexas.edu},}}
{\texttt{\href{mailto://cyrill.stachniss@igg.uni-bonn.de}{cyrill.stachniss@igg.uni-bonn.de},}}
{\texttt{\href{mailto://forrest.laine@vanderbilt.edu}{forrest.laine@vanderbilt.edu}}}
}
}

\maketitle

\begin{abstract}
In multi-agent settings, game theory is a natural framework for describing the strategic interactions of agents whose objectives depend upon one another's behavior.
Trajectory games capture these complex effects by design.
In competitive settings, this makes them a more faithful interaction model than traditional ``predict then plan'' approaches.
However, current game-theoretic planning methods have important limitations. 
In this work, we propose two main contributions.
First, we introduce an offline training phase which reduces the online computational burden of solving trajectory games.
Second, we formulate a \emph{lifted} game which allows players to optimize multiple candidate trajectories in unison and thereby construct more competitive ``mixed'' strategies.
We validate our approach on a number of experiments using the pursuit-evasion game ``tag.''
\end{abstract}
\IEEEpeerreviewmaketitle

\section{Introduction}
\label{sec:intro}

Trajectory optimization techniques have become increasingly common in motion planning.
So long as vehicle dynamics, design objectives, and safety constraints satisfy mild regularity conditions, a motion planning problem may be encoded as a nonlinear program and solved efficiently to a locally-optimal solution.
The widespread successes of trajectory optimization have sparked growing interest in similar techniques for multi-agent, noncooperative decision-making and motion planning.
In this context, game theory offers an elegant mathematical framework for modeling the strategic interactions of rational agents with distinct interests.
By reasoning about interactions with others as a \emph{trajectory game}, an autonomous agent can plan future decisions while accounting for the strategic reactions of others.

Since they involve multiple players with distinct, potentially competing objectives, trajectory games can be far more complex to solve than single-agent trajectory optimization problems.
Recent algorithmic advances make solving trajectory games tractable in some instances \cite{fridovich2020icra, di2019cdc}.
Still, they remain fundamentally more challenging to solve than single-agent problems, and consequently, trajectory games have not been widely adopted in the robotics community.

\begin{figure}[ht!]
    \centering
    \subfigure[Pure strategies\label{fig:equilibrium-differences-pure}]{
    \centering
    \includegraphics[scale=\makiescale]{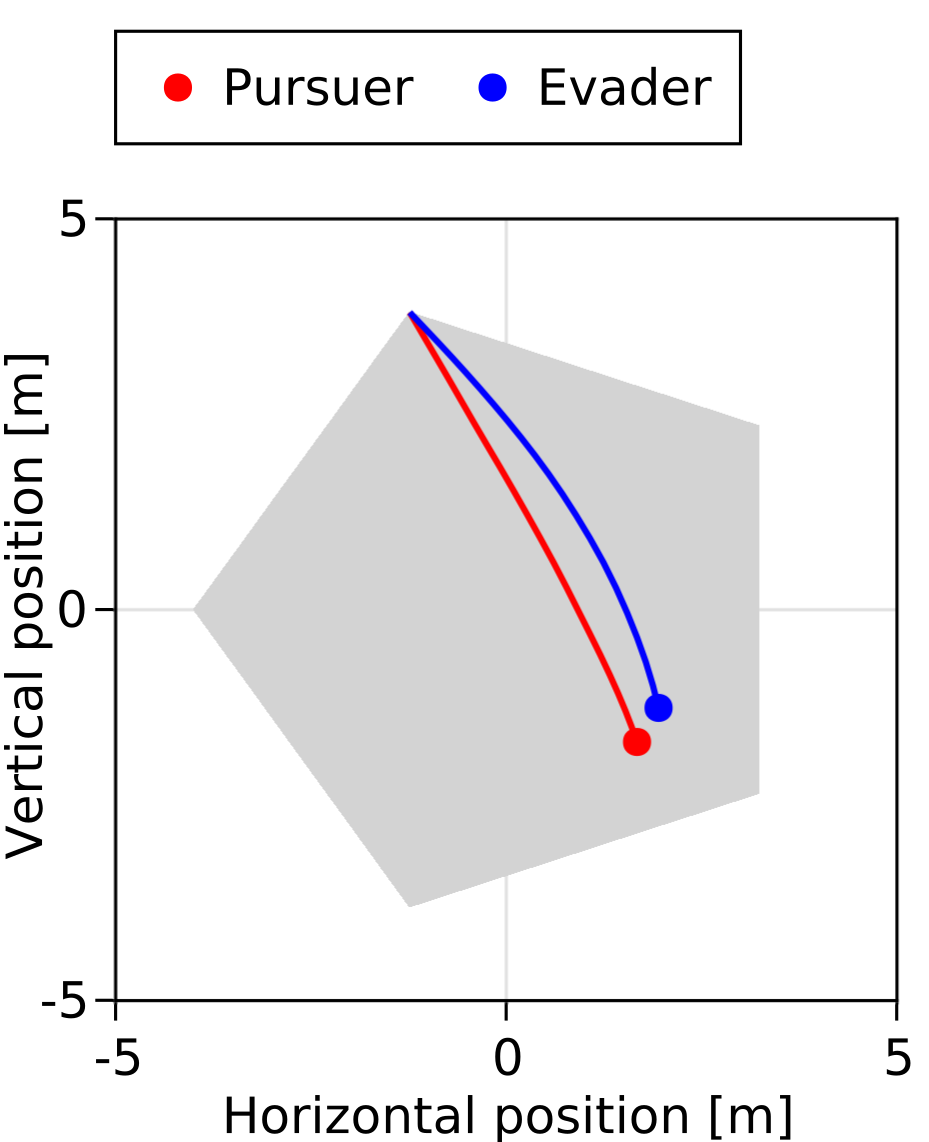}
    }
    \subfigure[Mixed strategies\label{fig:equilibrium-differences-mixed}]{
    \centering
    \includegraphics[scale=\makiescale]{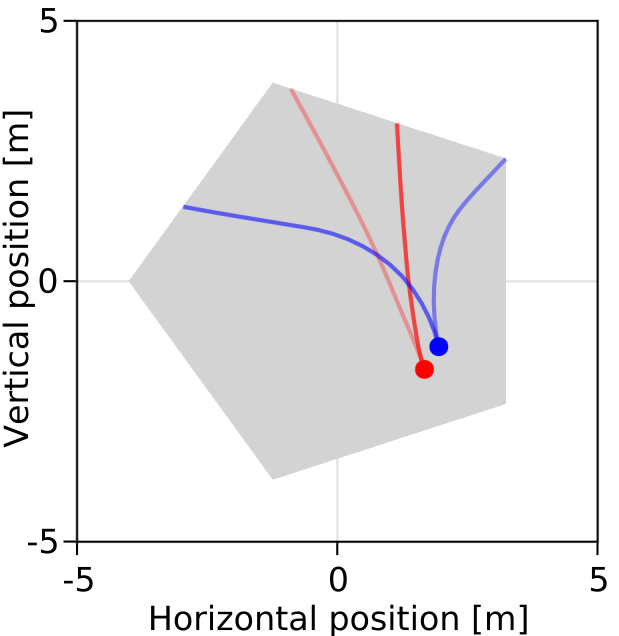}
    }
    \caption{
    A zero-sum game of tag played between two agents with planer point-mass dynamics in a pentagonal environment.
    (a)~In pure strategies, players are bound to deterministic behavior, and the evader is quickly captured. (b)~Our approach lifts the strategy space to learn more competitive, mixed strategies, \ie,~distributions over multiple trajectory candidates per player.
    The opacity of each trajectory in (b) encodes probability of selecting that learned candidate.
    }
    \label{fig:equilibrium-differences}
\end{figure}

Perhaps more importantly, however, equilibrium solutions to trajectory games do not always exist.
Nonexistence arises even in extremely simple static games such as rock-paper-scissors, in which neither player wishes to commit to a fixed, deterministic action which could be exploited by its opponent.
Unsurprisingly, the same phenomenon can arise in more complex trajectory games.
For example, consider the game of tag shown in \cref{fig:equilibrium-differences}, where the red pursuer wishes to catch the blue evader.
Here, if the evader chooses a single, deterministic trajectory, it will certainly be caught by a rational pursuer.
In the context of small, discrete games such as rock-paper-scissors, these non-existence issues are commonly avoided by allowing players to ``mix'' their actions, i.e., to choose an action at random from a distribution of their choice. This distribution is called a ``mixed strategy,'' in contrast to the choice of a single deterministic action or ``pure strategy.''
However, for continuous trajectory games it can be difficult to represent mixed strategies.
Hence, it is common to regularize players' objectives in order to encourage the existence of pure solutions.\footnote{\looseness=-1 Regularizing players' control inputs to ensure the existence of equilibria is well-established in the literature on dynamic games and robust control \cite{basar1998gametheorybook}, \cite{basar2008h}.}
For example, in \cref{fig:equilibrium-differences-pure} each player is penalized for large accelerations, leading to an equilibrium in which the evader is cornered by the pursuer.

With these issues in mind, this paper introduces the following key contributions:
\begin{enumerate}
    \item a principled method for reducing the online computation needed to solve trajectory games via the introduction of an offline training phase, and
    \item a formulation of \emph{lifted} games over multiple trajectory candidates, which admit a natural class of high-performance mixed strategies. 
\end{enumerate}
Together, these contributions enable efficient and reliable on-line trajectory planning for autonomous agents in noncooperative settings, such as the tag example of \cref{fig:equilibrium-differences}.
We validate our methods in a suite of Monte Carlo studies, in which we demonstrate that lifting gives rise to mixed strategies as shown in \cref{fig:equilibrium-differences-mixed}, providing a significant competitive advantage in both open-loop and receding-horizon play.
Our method's reliable convergence and its ability to explicitly account for constraints enables training from scratch within only a few minutes of simulated self-play.
Once fully trained, learning can be disabled and our method generates mixed strategies within \SI{2}{\milli\second} for the tag example in \cref{fig:equilibrium-differences}.

\section{Related Work}
\label{sec:related}


Our contributions build upon recent work in trajectory optimization and game-theoretic planning, and bear a close relationship with work in learning motion primitives and implicit differentiation.
We discuss these relationships in further detail below.

\subsection{Trajectory Optimization}

Trajectory optimization refers to a finite-horizon optimal control problem in which a robot seeks a sequence of control inputs which minimize a performance criterion \cite{liberzon2011calculus}.
It is common to use trajectory optimization for model predictive control (MPC), whereby a robot quickly re-optimizes a new sequence of control inputs as new sensor data becomes available \cite{borrelli2017predictive}.
While a host of trajectory optimization techniques have been proposed in recent years, most common algorithms build upon the iterative linear-quadratic regulator \cite{li2004iterative,li2007iterative,todorov2005generalized} and differential dynamic programming \cite{jacobson1970differential,xie2017differential,tassa2014control}. 
In turn, these may be understood as specific approximations to standard algorithms in nonlinear programming (NLP), such as sequential quadratic programming \cite{nocedal2006optimizationbook,bertsekas1997nonlinear}.
As discussed below, this fundamental NLP representation underlies the proposed approach for multi-agent trajectory games.

\subsection{Trajectory Games}
\label{sec:background_traj_games}

Recent work has sought to generalize the aforementioned trajectory optimization techniques to address multi-agent, competitive planning.
Here, each player seeks to minimize an individual performance criterion subject to constraints arising from, e.g., dynamics and actuator limits. 
The objectives and constraints for different players may, in general, depend upon the trajectories of others.
Solutions to these problems are characterized by equilibrium points at which all players' strategies are unilaterally optimal.

The theoretical underpinnings of \emph{dynamic} games were established in the context of state feedback \cite{isaacs1954differential,basar1998gametheorybook,starr1969further,starr1969nonzero}. However, computational methods were historically limited to highly-structured problems such as those found in linear robust control \cite{basar2008h,green2012linear}.
Recent work on iterative linear-quadratic methods \cite{fridovich2020icra,laine2021arxiv,laine2021multi} extends these ideas to more general games such as those found in noncooperative robotic planning.

Closely related problems have also been studied in the context of \emph{static} games.
Here, equilibrium points are found by treating the trajectory of each player as a single action, and assuming the players choose these actions simultaneously~\cite{basar1998gametheorybook}.
This results in a Generalized Nash Equilibrium Problem (GNEP), for which general-purpose solution methods exist \cite{facchinei2009generalized,facchinei2010generalized,dirkse1995path}.
Several domain-specific solvers have been developed to exploit the structure of trajectory games, ranging from augmented Lagrangian \cite{cleac2020rss} to iterated best response methods~\cite{wang2019dars,wang2021game}.
Still, these methods can have a high computational burden in challenging settings.

Regardless of the equilibrium definition (dynamic or static), solving trajectory games is fundamentally harder than solving single-agent trajectory optimization problems, if for nothing else but the increased problem dimension. The number of decision variables scales linearly with the number of players involved, and even with proper handling of sparsity, computation generally scales cubicly with the number of players \cite{fridovich2020icra}. 
In \cref{sec:reduced-runtime}, we introduce an offline training phase for trajectory games which effectively reduces the online computational burden to that of solving a trajectory optimization problem for each player in parallel.

\subsection{Motion Primitives}

In this work, we introduce a \emph{trajectory lifting} technique, which may be understood in the context of motion primitives~\cite{khatib1997robot}.
As we discuss in \cref{sec:approach}, this reformulation endows each player with a distribution over finitely many trajectory candidates, which may be learned. 
However, learning trajectories, or motion primitives, is also meaningful in the context of a single agent, and recent work has proposed this concept in the contexts of quadcopter navigation \cite{camci2019learning} and robot manipulation \cite{stulp2011learning}.
In this light, the present paper may be viewed as a multi-agent generalization of these techniques.
Additionally, our work constitutes an adaptive, learning-enabled generalization of the multi-agent motion primitive games formulated for autonomous racing in \cite{liniger2019noncooperative}.
In \cref{sec:lifted}, we show that these trajectory primitives may be learned efficiently with first-order optimization. 

\subsection{Differentiable Optimization}

To improve learning efficiency, we employ implicit differentiation to propagate derivative information through all steps of our proposed trajectory lifting approach.
Recent works in end-to-end neural architectures for autonomous driving have developed specialized network layers that embed optimization problems \cite{amos2017optnet,agrawal2019differentiable,amos2019differentiable}.
Like these methods, 
we obtain derivatives of players' game values with respect to learnable parameters by implicitly differentiating through the first order optimality conditions for all players in a lifted trajectory game.
\section{Formulation}\label{sec:formulation}

We develop our approach in the context of games played between two\footnote{Although we limit our discussion to two players, our formulation may be extended to the general case. 
For further discussion, refer to \cref{subsec:many-players}. 
} agents over time, in which each agent's motion is characterized by a smooth discrete-time dynamical system.
That is, we model agent $i$'s motion as the temporal evolution of its state $\bar\bx_i(t) \in \mathbb{R}^n$ and control input $\bar \bu_i(t) \in \mathbb{R}^m$ over discrete time-steps $t \in \{1,\dots,T\}$ with $\bar\bx_i(t+1) = F\big(\bar\bx_i(t), \bar\bu_i(t)\big)$ for differentiable vector field $F(\cdot, \cdot)$.

Taking an egocentric approach, we investigate using \emph{model-predictive game play} (MPGP) \cite{fridovich2020icra,cleac2020rss,wang2019dars} as a method by which each player can plan strategically while accounting for the predicted reactions of its opponent.
MPGP constitutes a natural analogue to MPC \cite{borrelli2017predictive} for \emph{noncooperative}, multi-agent settings.
That is, at regular intervals the `ego' agent formulates a finite-horizon trajectory game between itself and its opponent. 
The equilibrium of this game specifies optimal trajectories for both players; the ego agent begins to execute its equilibrium trajectory, and the procedure repeats after a short time once the players have moved. 

The finite-horizon trajectory games formulated at each planning interval can be modeled as a pair of coupled optimization problems, as is common in the literature \cite{basar1998gametheorybook,laine2021arxiv}:
\begin{subequations}\label{eq:game_def_1}\begin{align}\begin{split}\label{eq:pursuer_1}
    \mathrm{OPT}_1(\tau_2, \bx_1) \ := \ \argmin_{\tau_1} &\ \ \ f_1(\tau_1, \tau_2) \\ 
    \text{s.t.} \ \  &\ \ \ \tau_1 \in \mathcal{K}_1(\bx_1)
\end{split}\\ \begin{split}\label{eq:evader_1}
    \mathrm{OPT}_2(\tau_1, \bx_2) \ := \ \argmin_{\tau_2} &\ \ \ f_2(\tau_1, \tau_2) \\ 
    \text{s.t.} \ \  &\ \ \ \tau_2 \in \mathcal{K}_2(\bx_2)
\end{split}\end{align}\end{subequations}
The decision variables $\tau_i$ for each player $i\in\{1,2\}$ represent discrete-time state-control trajectories starting from initial configuration $\bx_i$.
Therefore, the constraint sets $\mathcal{K}_i(\bx_i)$ represent the set of all trajectories satisfying dynamic constraints, control limits, etc.
Note that these sets need not be compact or convex, and that
the players' constraint sets are independent of one another's trajectory.
In contrast, the differentiable cost functions $f_i(\tau_1, \tau_2)$ in each problem can depend upon \emph{both} players' trajectories.
Thus, the $f_i$ can encode preferences such as goal-reaching and collision-avoidance.
In particular, since constraints are decoupled, we assume that any aspect of interaction in the game is modeled via the cost functions and not through constraints. 

As discussed in \cref{sec:background_traj_games}, existing methods to find local equilibrium solutions of \cref{eq:game_def_1} include iterative best response \cite{wang2019dars,wang2019game} and iterative linear-quadratic methods \cite{fridovich2020icra,laine2021arxiv,di2019cdc,cleac2020rss}. 
A Nash equilibrium for Game \cref{eq:game_def_1} starting from initial configuration $(\bx_1,\bx_2)$ is defined to be a pair of trajectories, $(\tau_1^*, \tau_2^*)$, satisfying
\begin{equation}
    \tau_1^* \in \mathrm{OPT}_1(\tau_2^*, \bx_1) \ \ \text{and} \ \  \tau_2^* \in \mathrm{OPT}_2(\tau_1^*, \bx_2).
\end{equation}
Nash equilibrium points encode rational strategic play for both players, and hence serve as a natural solution concept in trajectory games \cref{eq:game_def_1}. 
For this reason, most recent MPGP methods \cite{di2019cdc,cleac2020rss,fridovich2020icra,laine2021arxiv,wang2019game} aim to compute a  Nash equilibrium of the trajectory game \cref{eq:game_def_1}.
As a practical matter, however, Nash equilibria can be intractable to compute and modern methods often settle for \emph{local} equilibria, in which players' trajectories are only locally optimal.

Several important issues arise when employing an MPGP approach. The first is that solving for a Nash equilibrium---even a local Nash---is harder than solving for a locally optimal trajectory (as would be done in the single-agent setting of MPC). Not only is the search space larger due to the inclusion of both players' trajectory variables, but potential complications are also introduced by agents' different and potentially conflicting objectives. As in MPC, real-world applications depend upon our ability to compute solutions to \cref{eq:game_def_1} quickly; unfortunately though, this increased complexity can make MPGP unsuitable for real-time applications. 

The second issue is that a Nash equilibrium point may not even exist for Game \cref{eq:game_def_1}, particularly when one or both of the subproblems \cref{eq:pursuer_1} and \cref{eq:evader_1} are non-convex. 
Relatedly, even if a Nash equilibrium does exist, it may not be unique. Consequently, in MPGP an agent may spend significant computational effort searching for an equilibrium point that does not exist. Worse, non-uniqueness implies that even if an agent finds an equilibrium, the opponent's predicted Nash trajectory may not be representative of its true strategy. 

To make these issues more concrete, consider the following ``toy'' variant of the tag game in \cref{fig:equilibrium-differences}. 
Let $\tau_1$ and $\tau_2$ be scalars,  $f_1(\tau_1,\tau_2) = \|\tau_1-\tau_2\|_2^2 = -f_2(\tau_1,\tau_2)$, and $\mathcal{K}_1 = \mathcal{K}_2 = [-1,1]$. 
Here, the pursuer (Player 1) and evader (Player 2) choose positions in the interval $[-1,1]$.
By inspection, we may verify that no Nash equilibrium exists. 
With additional regularization, however, this example can be modified to admit \emph{local} equilibria. With $f_1$ defined as above, if we redefine the function $f_2(\tau_1,\tau_2) = -\|\tau_1-\tau_2\|_2^2-\|\tau_2\|_2^2$, two local equilibrium points result: $(-1,-1)$ and $(1,1)$. 
Unfortunately, the locality of these equilibria causes a significant problem: if Player~1 computed one of these equilibria, and Player~2 computed the other, the resulting pairing of actions, e.g. $(-1,1)$, would have a significantly different outcome for the players than what occurs at either local equilibrium. 



\section{Approach}
\label{sec:approach}

We propose a novel \emph{lifted} trajectory game formulation which ameliorates the complexity and existence/uniqueness issues discussed in \cref{sec:formulation}.

\subsection{Reducing Run-time Computation}
\label{sec:reduced-runtime}
To begin, we propose a technique for offloading the complexity introduced by multi-agent interactions to an offline training phase. The result of this pre-training is that at run-time, only a single-agent trajectory optimization problem remains for each player, and these problems can be solved in parallel. 
To do so, we introduce auxiliary trajectory references $\xi_i$ for each Player $i$, and with a slight abuse of notation, reformulate Game \cref{eq:game_def_1} as:
\begin{subequations}\label{eq:game_def_2}\begin{align}\begin{split}\label{eq:pursuer_2}
    \mathrm{OPT}_1(\xi_2, \bx_1, \bx_2) \ := \ \argmin_{\xi_1} &\ \ \ f_1(\tau_1, \tau_2) 
\end{split}\\ \begin{split}\label{eq:evader_2}
    \mathrm{OPT}_2(\xi_1, \bx_1, \bx_2) \ := \ \argmin_{\xi_2} &\ \ \ f_2(\tau_1, \tau_2) 
\end{split}\end{align}\end{subequations}

Here, the decision variables $\xi_i$ and the initial states $\bx_i$ determine trajectory variables $\tau_i = \mathrm{TRAJ}_i(\xi_i, \bx_i)$, which we presume to have the form:
\begin{equation}\label{eq:traj_def}\begin{aligned}
    \mathrm{TRAJ}_i(\xi_i, \bx_i) \ := \ \argmin_{\tau} &\ \frac{1}{2} \| G_i\tau - \xi_i \|_2^2 + \frac{1}{2}\| H_i\tau \|_2^2 \\ 
    \text{s.t.} \ \  &\ \ \ \tau \in \mathcal{K}_i(\bx_i).
\end{aligned}\end{equation}

The first term of the cost functions in problem \cref{eq:traj_def} enables $\xi_i$ to serve as a reference for $\tau_i$. For example, if $\tau_i = [X_i^\mathsf{T} \ U_i^\mathsf{T}]^\mathsf{T}$, with $X_i$ and $U_i$ representing the state and control variables of the trajectory, then $G_i$ could be $[0 \ I]$, giving $\xi_i$ the interpretation of a control reference signal. Alternatively, $\xi_i$ could represent a reference for the terminal state of the trajectory. 
The second term allows regularization of the trajectory, which may be needed if the reference and constraint sets are otherwise insufficient to isolate solutions.

In Appendix~\ref{appendix:equiv}, we prove that for any stationary point $(\tau_1,\tau_2)$ of Game \cref{eq:game_def_1}, there exists a stationary point $(\xi_1,\xi_2)$ of Game \cref{eq:game_def_2} such that for both players $i$, $\tau_i = \mathrm{TRAJ}_i(\xi_i,x_i)$. This implies that no stationary points are ``lost'' in the reformulation from \cref{eq:game_def_1} to \cref{eq:game_def_2}. Furthermore, we discuss practical methods to guarantee that all computed stationary points of \cref{eq:game_def_2} result in stationary points of \cref{eq:game_def_1}. This implies that no spurious stationary points are ``introduced'' in the reformulation.

With this reformulation, it is now possible to offload a significant amount of computation to an offline training phase. 
To do so, we propose training a \emph{reference generator} for each player, denoted by the function $\pi_{\theta_i}(\bx_1, \bx_2)$, which maps both player's initial states $(\bx_1, \bx_2)$ to reference $\xi_i$.
Generator $\pi_{\theta_i}$ is parameterized by $\theta_i$ and, e.g., may be a multi-layer perceptron as described in \cref{sec:implementation}. Given a data set\footnote{$D$ need not be constructed laboriously; in \cref{sec:selfplay-learning} we show that it can even be accumulated during online operation.} 
of initial MPGP configurations ${D := \{\bx_1^k,\bx_2^k\}_{k=1}^d}$, we train the reference generators $(\pi_{\theta_1}$ and $\pi_{\theta_2}$) by solving the following game offline:
\begin{subequations} \label{eq:game_def_training}
    \begin{align}\begin{split}
        \mathrm{GEN}_1(\theta_2, D) &:= \argmin_{\theta_1} \frac{1}{d}\sum_{k=1}^d f_1(\tau_1^k, \tau_2^k),
        \end{split}\\
        \begin{split}
        \mathrm{GEN}_2(\theta_1, D) &:= \argmin_{\theta_2} \frac{1}{d}\sum_{k=1}^d f_2(\tau_1^k, \tau_2^k).
        \end{split}
    \end{align}
\end{subequations}
Similar to Game \cref{eq:game_def_2}, each trajectory $\tau_i^k$ appearing in \cref{eq:game_def_training} is a function of $\theta_i,\bx_1^k$ and $\bx_2^k$, via the relationships
\begin{equation} \label{eq:tautraj}
\begin{aligned}
    \tau_1^k &= \mathrm{TRAJ}_1\Big(\pi_{\theta_1}(\bx_1^k,\bx_2^k),\bx_1^k\Big), \\
    \tau_2^k &= \mathrm{TRAJ}_2\Big(\pi_{\theta_2}(\bx_1^k,\bx_2^k),\bx_2^k\Big).
\end{aligned}
\end{equation}
A Nash equilibrium for Game \cref{eq:game_def_training} can be found by simultaneous gradient descent over each player's reference generator parameters, $\theta_1$ and $\theta_2$.
Simultaneous gradient play is widely used in adversarial machine learning, and is particularly important in both generative adversarial networks \cite{goodfellow2014arxiv} and multi-agent reinforcement learning \cite{foerster2017learning}.
Here, each player's parameter $\theta_i$ is iteratively updated as $\theta_i \leftarrow \theta_i - \delta \theta_i$, where
\begin{equation} \label{eq:training_update}
\begin{aligned}
    \delta\theta_1 &= \frac{\alpha_1}{d}\nabla_{\theta_1}\sum_{k=1}^d f_1\Big(\mathrm{TRAJ}_1(\theta_1,\bx^k), \mathrm{TRAJ}_2(\theta_2, \bx^k)\Big) \\ 
    \delta\theta_2 &= \frac{\alpha_2}{d}\nabla_{\theta_2}\sum_{k=1}^k f_2\Big(\mathrm{TRAJ}_1(\theta_1,\bx^k), \mathrm{TRAJ}_2(\theta_2, \bx^k)\Big)
\end{aligned}
\end{equation}
Note that in \cref{eq:training_update}, we use the shorthand
${\bx^k \equiv (\bx_1^k,\bx_2^k)}$, and although we abbreviate the arguments to the $\mathrm{TRAJ}$ functions, they should be interpreted exactly as in \cref{eq:tautraj}. The values $\alpha_1$ and $\alpha_2$ are learning rates used for the respective reference generators. 
To compute these gradients, we must differentiate through each player's objective $f_i$ and through each $\mathrm{TRAJ}_i$.
We have assumed \emph{a priori} that the $f_i$ were differentiable.
To differentiate through the trajectory optimization step of \cref{eq:traj_def}, we follow a procedure similar to what is outlined in \cite{amos2017optnet,agrawal2019differentiable,amos2019differentiable}.

Assuming that offline gradient play converges to a Nash equilibrium over the training set $D$, and that the resulting trajectory generators generalize to instances of \cref{eq:game_def_2} defined by configurations $(\bx_1,\bx_2)$ not included in $D$, then an approximate equilibrium solution to Game \cref{eq:game_def_1}, denoted by $(\hat{\tau}_1^*,\hat{\tau}_2^*)$ can be found via the following evaluations:
\begin{equation}
    \begin{alignedat}{3}
        \xi_1 &= \pi_{\theta_1}(\bx_1,\bx_2), 
        & \xi_2 &= \pi_{\theta_2}(\bx_1,\bx_2) \\
        \hat{\tau}_1^* &= \mathrm{TRAJ}_1(\xi_1, \bx_1), \ \ \ & \hat{\tau}_2^* &= \mathrm{TRAJ}_2(\xi_2, \bx_2)
    \end{alignedat}
\end{equation}

Hence, at run-time, solving this reformulated game only requires evaluating the reference generators and solving the optimization problems $\mathrm{TRAJ}_i$ to compute the corresponding trajectories.
These problems can be solved in parallel.
Furthermore, since trajectories are generated according to \cref{eq:traj_def}, each player's constraints defined by $\mc{K}_i(x_i)$ are guaranteed to be satisfied. 
Thus, if the reference generator does not generalize well, the only negative consequence is suboptimality (but not infeasibility).\footnote{Recall that each player's constraints do not depend upon the trajectory of the other player. Extension to this more complex case is possible, but beyond the scope of this paper.}

In summary, by pre-training a reference generator for each player offline, the run-time concerns of MPGP can be alleviated. Unfortunately, however, potential issues persist due to the possible non-existence or non-uniqueness of Nash equilibrium solutions. To address this concern, we introduce a concept we refer to as \emph{strategy lifting}.  

\subsection{Lifted Trajectory Games} \label{sec:lifted}

\begin{figure*}[t]
    \centering
    \includegraphics[width=0.9\textwidth]{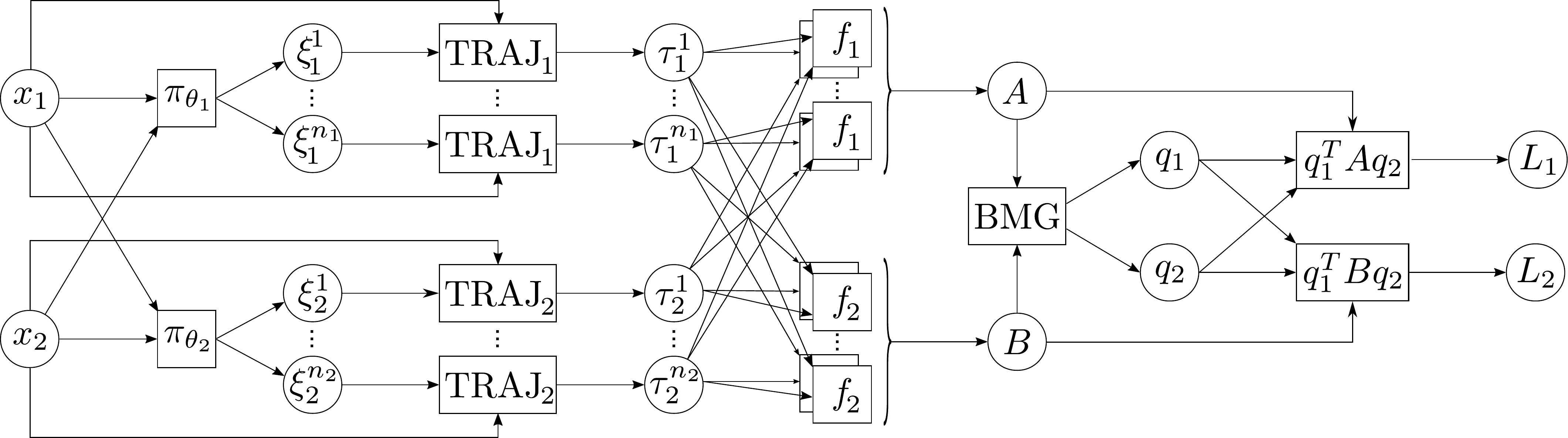}
    \caption{Overview of our proposed lifted game solver using reference generators. Generators  $\pi_{\theta_i}$ for both players are trained jointly to minimize their respective average losses $L_i$ (\ref{eq:l_def}).
At run-time deployment of this pipeline, the pre-trained generators produce references $\xi_i$ which approximate a Nash equilibrium of \cref{eq:game_def_3}. These references evaluate to equilibrium motion plan candidates $\tau_i$ and mixing strategies $q_i$ for each player. When used in an ego-centric MPGP fashion, e.g. for Player 1, $(\tau_1, q_1)$ serves as a distribution over ego motion plans, and $(\tau_2, q_2)$ constitutes a probabilistic opponent prediction.}
    \label{fig:pipeline_overview}
\end{figure*}

Rather than endowing each player with a single reference and its resulting trajectory, we allow each player to choose among multiple independent references according to the equilibrium solution of the bimatrix game formulated below: 
\begin{subequations}\label{eq:game_def_3}\begin{align}\begin{split}\label{eq:pursuer_3}
 \mathrm{OPT}_1^\mathrm{lifted}(\xi_2, \bx_1, \bx_2) :=  &\argmin_{\xi_1}   L_1(\xi_1,\xi_2)
\end{split}\\ \begin{split}\label{eq:evader_3}
  \mathrm{OPT}_2^\mathrm{lifted}(\xi_1, \bx_1, \bx_2)  :=  &\argmin_{\xi_2}  L_2(\xi_1,\xi_2)
\end{split}
\end{align}\end{subequations} 
where the dependence of $L_1$ and $L_2$ on $\xi_1$ and $\xi_2$ is made explicit through the following relationships:
\begin{subequations} \label{eq:pipeline_eqs}
\begin{align}
    \label{eq:l_def}
     L_1 &= q_1^\intercal A q_2, \ \ & L_2 &= q_1^\intercal B q_2 \\
     \label{eq:AB_def}
     A_{i,j} &= f_1(\tau_1^i, \tau_2^j), \ \ &  B_{i,j} &= f_2(\tau_1^i, \tau_2^j) 
     \end{align}
     \vspace{-22pt}
     \begin{align}
    \tau_1^i &= \mathrm{TRAJ}_1(\xi_1^i, \bx_1), \ i\in N_1\\
      \tau_2^j &= \mathrm{TRAJ}_2(\xi_2^j, \bx_2), \ j\in N_2 \\
      \label{eq:bmg_def}
     (q_1, q_2) &= \mathrm{BMG}(A, B).
\end{align}
\end{subequations}


Here, $N_1 := \{1,...,n_1\}$, and $N_2:=\{1,...,n_2\}$, where $n_1$ and $n_2$ are the number of trajectories for Player~1 and Player~2, respectively. Specifically, each variable $\tau_1^i$ represents one of $n_1$ trajectories that Player~1 optimizes over (and similar for Player~2).
The reference variables $\xi_i := (\xi_i^1, \dots, \xi_i^{n_i})$ are now collections of trajectory references, with each $\xi_i^j$ associated to~$\tau_i^j$.
The function $\mathrm{BMG}(A,B)$ maps cost matrices $A$ and $B$ to mixed equilibrium strategies for the resultant bimatrix game. Specifically, $\mathrm{BMG}(A,B)$ returns a point $(q_1^*, q_2^*) \in \mathbb{R}^{n_1}\times\mathbb{R}^{n_2}$ such that 
\begin{equation}\label{eq:bmg}
    \begin{alignedat}{3}
        (q_1^*)^\mathsf{T}Aq_2^* &\leq q_1^\mathsf{T}Aq_2^*, &&\forall q_1 \in \Delta^{n_1-1}, \\
        (q_1^*)^\mathsf{T}Bq_2^* &\leq (q_1^*)^\mathsf{T}Bq_2, \ \ \ &&\forall q_2 \in \Delta^{n_2-1}.
    \end{alignedat}
\end{equation}

In \cref{eq:bmg}, $\Delta^{k}$ is the $k$-simplex, representing the space of valid parameters for a categorical distribution over $k+1$ elements. Note that when $n_1=n_2=1$, Game \cref{eq:game_def_3} reduces exactly to Game \cref{eq:game_def_2}, since $\mathrm{BMG}(A,B) \equiv (1,1)$.

Continuous games, such as \cref{eq:game_def_1}, may suffer from non-existence of equilibrium points, but when those games are \emph{separable}, a mixed strategy equilibrium is known to exist with finite support \cite{stein2008separable,glicksberg20169}. This theoretical result motivates the lifting \cref{eq:game_def_3} of our reference-based formulation of Game \cref{eq:game_def_1}. 

For comparative purposes, Game \cref{eq:game_def_3} is presented analogously to \cref{eq:game_def_2}, i.e. without any explicit dependence on reference generators. Nevertheless, generators can be trained analogously to \cref{eq:game_def_training}, using a similar simultaneous gradient procedure. As before, solving \cref{eq:game_def_3} or an analogous version of \cref{eq:game_def_training} via gradient play requires that each of the function evaluations in \cref{eq:pipeline_eqs} are differentiable in their arguments. It has already been discussed how each of these functions are differentiable, with the exception of the bimatrix game in \cref{eq:bmg_def}. We discuss in Appendix \ref{appendix:bmg} how this function is also differentiable.


A summary of the lifted game solver that utilizes reference generators for reduced online computation is provided in \cref{fig:pipeline_overview}.
With this computation graph, the cost of approximating solutions to Game \cref{eq:game_def_3} is that of evaluating the two generator calls, solving the resultant $n_1 + n_2$ trajectory optimization problems (in parallel, if warranted), and solving a bimatrix game formed by considering all combinations of player trajectories. 



\subsection{Extension to Many-Player Games}\label{subsec:many-players}
We reiterate that, although we present this formulation in the two player setting, generalizations to larger games are straightforward. In this case, each player would consider multiple trajectory candidates, and a cost \emph{tensor} would be created for each player, representing the costs for all possible combinations of players' trajectories. A finite Nash equilibrium could be identified over these cost tensors to compute the equilibrium mixing weights $q_i$ 
\cite{papadimitriou2005computing}, and computed by solving a nonlinear, mixed complementarity program \cite{Laine_TensorGames, dirkse1995path}.
Note that the majority of computation required to construct these cost tensors can be trivially parallelized, making our framework particularly promising for many-player settings.
We defer further study of such games to future work.

\subsection{Implementation}
\label{sec:implementation}

We implement the lifted game solver depicted in \cref{fig:pipeline_overview} in the Julia programming language \cite{bezanson2017sirev}.
For the experiments conducted in this work, reference generators $\pi_{\theta_i}$ are realized as multi-layer perceptrons, trajectory optimization problems $\textrm{TRAJ}_i$ are solved via OSQP \cite{stellato2020osqp}, and bimatrix games are solved using a custom implementation of the Lemke-Howson algorithm \cite{lemke1964equilibrium}.

In order to facilitate back-propagation of gradients through this computation graph, we utilize the auto-differentiation tool Zygote \cite{innes2018zygote}.
For those components that cannot be efficiently differentiated automatically, namely $\textrm{TRAJ}_i$ and $\textrm{BMG}$ in \cref{fig:pipeline_overview}, we provide custom gradient rules via the implicit function theorem, \cf \cite{agrawal2019differentiable,amos2019differentiable,amos2017optnet} and Appendix~\ref{appendix:bmg}.
Our implementation can be found at 
\url{https://lasse-peters.net/pub/lifted-games}.
\section{Results}
\label{sec:results}

We have presented a novel formulation of lifted trajectory games in which learned reference generators facilitate the efficient online computation of mixed strategies.
In this section, we evaluate the performance of our proposed lifted game solver on variants of the ``tag'' game shown in \cref{fig:equilibrium-differences} and described below in \cref{sec:setup_tag}.
Concretely, we aim to quantify the utility of learning trajectory references rather than choosing them \emph{a priori} (\cref{sec:importance_learning}), characterize the equilibria identified by trajectory lifting (\cref{sec:equilibrium-differences}), evaluate the performance of trajectory lifting in head-to-head decentralized competition (\cref{sec:baseline-competition}), and demonstrate our method's capacity for online training in receding horizon MPGP (\cref{sec:selfplay-learning}). Our supplementary material includes a video summarizing these results.

\subsection{Environment: The Tag Game}
\label{sec:setup_tag}

We validate our methods in a two-player tag game, illustrated in \cref{fig:equilibrium-differences}.
 Here, each player's trajectory $\tau_i$ follows time-discretized planar double-integrator dynamics $\ddot p_i = u_i$, where $p_i\in\R^2$ is understood to represent horizontal and vertical position in the plane.
The set $\mc{K}_i(x_i)$ then encompasses all dynamically-feasible trajectories that also satisfy input saturation limits and state constraints.
In particular, we require that positions remain within a closed set, such as the pentagon illustrated in \cref{fig:equilibrium-differences}, and that speeds remain below a fixed magnitude.
These choices yield \emph{linear} constraints, so that \cref{eq:traj_def} becomes a quadratic program.
We note, however, that our approach does not rely upon this convenient structure and is compatible with more general embedded nonlinear programs.

For the purposes of this example, we shall designate Player~1 to be the ``pursuer'' and Player~2 to be the ``evader.''
Hence, the pursuer's objective $f_1(\tau_1, \tau_2)$ measures the average distance between players' trajectories over time and is regularized by the difference in control effort between the two players to ensure the existence of at least local pure Nash equilibria for the original game \cref{eq:game_def_1}.
The evader's objective is $f_2(\tau_1, \tau_2) = -f_1(\tau_1, \tau_2)$.  
Since the tag game has zero-sum cost structure, throughout the following evaluations we only report the cost for the pursuer and refer to this quantity as the \emph{game value}.
Furthermore, unless otherwise stated, we use an input reference signal $\xi_i$ for all players in \cref{eq:game_def_2} and \cref{eq:game_def_3}.

\subsection{The Importance of Learning Trajectory Candidates}
\label{sec:importance_learning}

Without lifting, it is still possible to approximate mixed strategies for the trajectory game by discretizing the trajectory space (e.g., via sampling \cite{liniger2019noncooperative}).
We compare to a sampling-based mixed-strategy baseline to study the isolated effects of learning in a lifted space.

\noindent \textbf{Setup.} 
We instantiate an evader with $n_2 = 20$ pre-sampled trajectory references.
To strengthen the evader, we ensure that these samples cover a large region of the trajectory space.
To that end, in this experiment (only) we use $\xi_i$ as a reference for Player $i$'s goal state rather than their input sequence. 
We compare the pursuer's performance for two different schemes of generating trajectory candidates.
The non-learning baseline \emph{samples} $n_1 \in \{1,\dots,20\}$ pursuer trajectory references from the same distribution as the evader.
Our method computes the pursuer strategy by performing gradient play on \cref{eq:pursuer_3} to \emph{learn} 2 trajectory candidates via the goal reference parameterization.
The mixed Nash equilibrium $(q_1, q_2)$ over the players' trajectory candidates is computed according to \cref{eq:pipeline_eqs}.
We evaluate both methods for 50 random initial conditions, and record the game value for each trial.

\noindent \textbf{Discussion.}
\Cref{fig:discretization_eval} summarizes the results of this experiment.
As shown, the baseline steadily improves its performance with increasing numbers of sampled trajectory references to mix over.
However, even with 20 trajectory samples, it cannot match the performance of our approach with only two learned candidates.
Moreover, learning only a few trajectory references drastically reduces the number of trajectory optimizations and, consequently, the size of the bimatrix game in~\cref{eq:pipeline_eqs}.

\begin{figure}[H]
    \centering
    \includegraphics[scale=\makiescale]{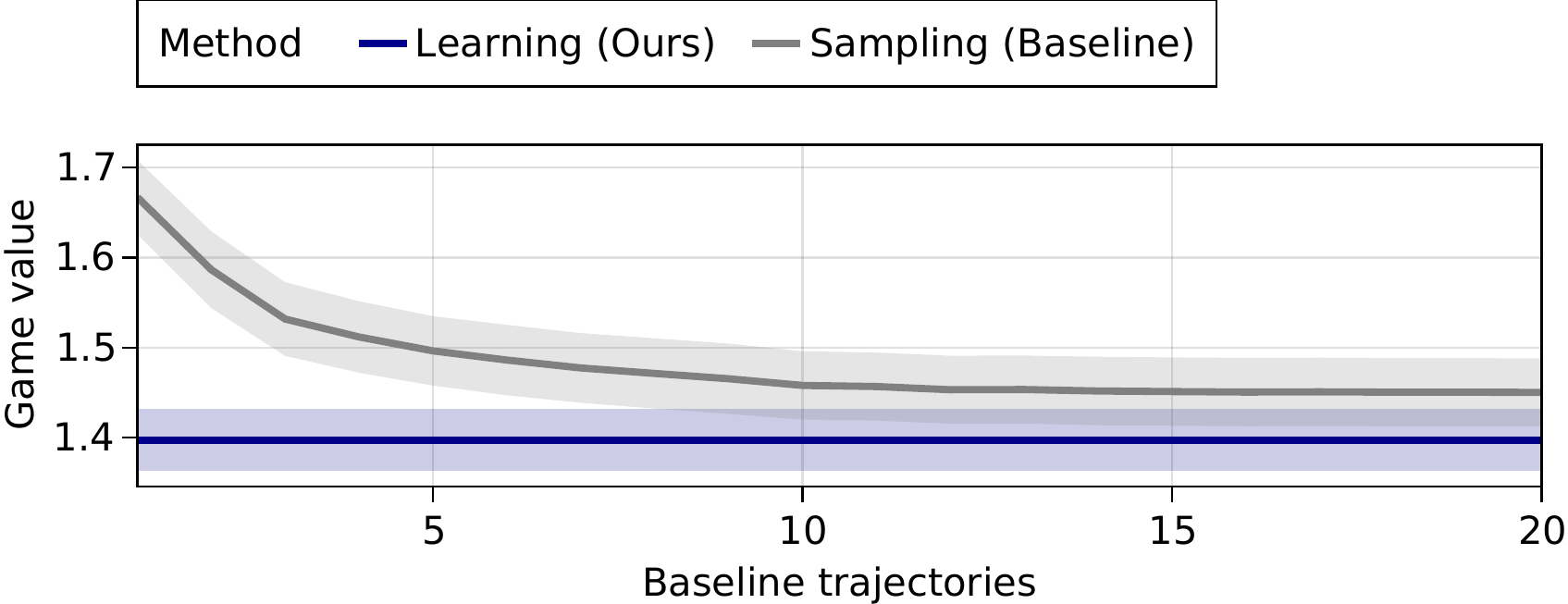}
    \caption{
    Comparison of game value for both sampled and learned pursuer trajectories. 
    Lines trace the sample mean over 50 randomized trials, and the surrounding ribbons denote the SEM.
    On the horizontal axis, we vary the number of sampled \emph{baseline} trajectories $n_1$, while fixing \emph{our} approach to learn only 2 trajectories.
    }
    \label{fig:discretization_eval}
\end{figure}

\subsection{Convergence and Characteristics of Lifted Equilibria}
\label{sec:equilibrium-differences}

In this experiment, we analyze \emph{mixed} strategies found by our lifted solver and compare them to \emph{pure} strategies computed by a non-lifting baseline.
We shall demonstrate that both approaches reliably converge to different equilibria, and characterize these differences. 

\noindent \textbf{Setup.}
We perform a Monte Carlo study in which we randomly sample 20 initial states of the tag game.
On each sample, we invoke two solvers which perform gradient play on different strategy spaces.
The baseline solver is restricted to pure strategies as in Game \cref{eq:game_def_2}.\footnote{Such pure Nash solutions could also be found using iterated best response~\cite{wang2019dars}, iterative linear-quadratic methods \cite{di2019cdc}, or mixed complementarity methods \cite{dirkse1995path}.}
Our method utilizes lifting to find mixed strategies which solve Game \cref{eq:game_def_3}.
In each iteration of gradient play, we record the game value.

\noindent \textbf{Discussion.} \cref{fig:equilibrium_learning} shows the reliable convergence of both methods in this Monte Carlo study.
Since both players learn competitively via simultaneous gradient play, the game value ought not to evolve monotonically; an equilibrium is reached when neither player can improve its strategy unilaterally.
At convergence, we observe that the mixed strategies found by our lifting procedure result in a higher game value.
This higher value implies that, by operating in a lifted strategy space, the evader can secure a greater average distance between itself and the pursuer.

\begin{figure}
    \centering
    \includegraphics[scale=\makiescale]{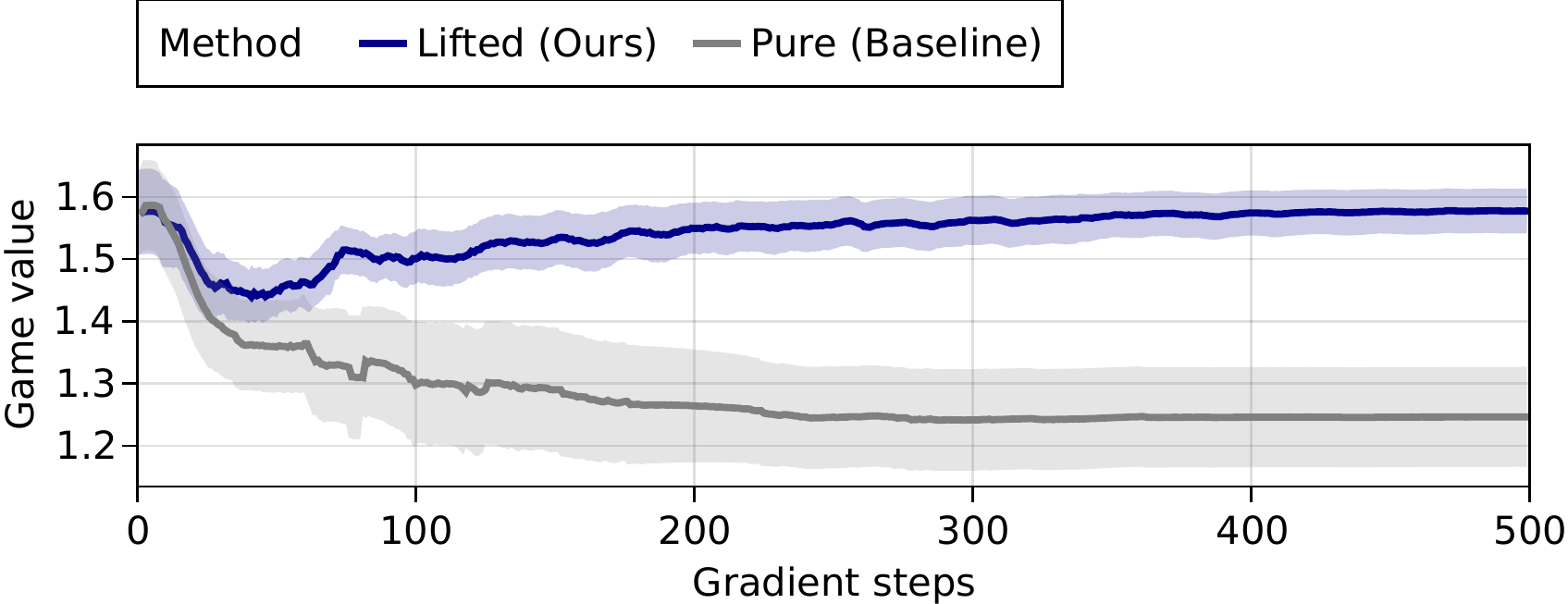}
    \caption{
    Equilibrium value convergence averaged over 20 initial states.
    Ribbons indicate the SEM.
    The baseline approximates Nash equilibria in \emph{pure} strategies via gradient play on \cref{eq:game_def_2}.
    Our method approximates Nash equilibria in \emph{lifted} strategies via gradient play on \cref{eq:game_def_3}.
    \label{fig:equilibrium_learning}
    }
\end{figure}

This gap in value may be understood intuitively by examining the strategy profiles for each method shown in \cref{fig:equilibrium-differences}.
In \cref{fig:equilibrium-differences-pure}, players are restricted to pure strategies, and a rational pursuer can exploit the evader's deterministic choice of trajectory.
In contrast, our proposed lifting formulation allows the evader to mix over multiple trajectory candidates, \cf \cref{fig:equilibrium-differences-mixed}, making its motion less predictable and hence increasing the chance of escaping the pursuer.
In response, the pursuer also mixes between two trajectory candidates.
However, each of the pursuer's candidates must account for the full distribution of evader trajectories; hence, the pursuer plans to turn less aggressively than the evader.

In this experiment, we have studied a \emph{centralized} setting in which each method computes strategies for both players from a single game.
Therefore, the results presented above are only suitable to characterize the solution points of \cref{eq:game_def_2} and \cref{eq:game_def_3}, but do not justify conclusions about the competitive performance of these solutions in decentralized settings, such as MPGP.
In the next section, we extend our analysis to settings in which the opponent's decision-making process is unknown.

\subsection{Competitive Evaluation Against Non-Lifted Strategies}
\label{sec:baseline-competition}

This experiment is designed to examine the performance of both pure (Baseline) and lifted (Ours) strategies in \emph{decentralized} head-to-head competition.
For this purpose, we perform two additional Monte Carlo studies which simulate tournaments among players in each strategy class.

Note that, in contrast to previous experiments, here, player strategies are not computed as the solution to a single, centralized game.
Rather, each player is oblivious to their opponent's decision making process and solves its own version of the game from a known initial state over a finite time interval. 

\subsubsection{Open-Loop Competition}
\label{sec:open-loop-competition}
To begin, we evaluate both methods in open-loop on a fixed, 20-step time horizon.

\noindent\textbf{Setup.}
For this Monte Carlo study, we randomly sample 100 initial states.
For every sampled state, we invoke pure and lifted game solvers twice with randomly sampled initial strategies; once to obtain pursuer strategies, and once to obtain evader strategies.\footnote{This initialization procedure avoids leaking information about players' decision making processes to one another.}
For all possible solver pairings on these 100 state samples we record the resultant value of the competing strategies; i.e., if Player~$i$ chooses trajectory $\tau_i$, we record $f_1(\tau_1, \tau_2)$.

\noindent\textbf{Discussion.}
\Cref{tab:open_loop_comparison} summarizes the mean and the standard error of the mean (SEM) of the resultant game value for this open-loop tournament.
The evader has a clear incentive to utilize lifted strategies, since they secure the highest game value irrespective of the solution technique used by the pursuer.
The best response of the pursuer is then also to play a lifted strategy to minimize value within this column.
Hence, 
(Ours, Ours) is the unique Nash equilibrium in this meta game between solvers.

Additionally, observe that the baseline pursuer performs very well against the baseline evader, as deterministic evasion strategies can always be exploited by a rational pursuer.
However, the tournament value reported in the bottom right of \cref{tab:open_loop_comparison} is inconsistent with the equilibrium value for the baseline found earlier in \cref{fig:equilibrium_learning}.
This discrepancy suggests that players in this decentralized setup find different local solutions depending on the initialization of the baseline solver. 
Hence, random initialization effectively makes even a pure strategy evader slightly unpredictable, thereby allowing it to attain a higher average value.
By contrast, the value of the lifted strategy computed by our method (top left, \cref{tab:open_loop_comparison}) closely agrees with the equilibrium value computed in \cref{fig:equilibrium_learning}, which indicates that non-uniqueness of solutions is not an issue for our approach.\footnote{This close agreement in value suggests that our method identifies global (rather than local) Nash equilibria which satisfy the so-called \emph{ordered interchangeability property} \cite{basar1998gametheorybook}. Unfortunately, as in continuous optimization, it is generally intractable to properly verify that these solutions are global.}

\begin{table}
    \centering
    \caption{\label{tab:open_loop_comparison} Open-loop competition.}
    \begin{tabular}{ccc}
        \toprule
                  & \multicolumn{2}{c}{\textbf{Evader}} \\
        \cmidrule(r){2-3}
        \textbf{Pursuer} & Lifted & Pure \\
        \midrule
         Lifted          & $1.577 \pm 0.021$ & $1.502 \pm 0.022$      \\
         Pure            & $1.672 \pm 0.022$ & $1.370 \pm 0.027$        \\
         \bottomrule
    \end{tabular}
\end{table}
\begin{table}
    \centering
    \caption{\label{tab:receding_horizon_comparison} Receding-horizon competition.}
    \begin{tabular}{ccc}
        \toprule
                  & \multicolumn{2}{c}{\textbf{Evader}} \\
        \cmidrule(r){2-3}
        \textbf{Pursuer} & Lifted & Pure \\
        \midrule
         Lifted          & $1.360 \pm 0.003$ & $1.289 \pm 0.005$      \\
         Pure            & $1.463 \pm 0.004$ & $0.903 \pm 0.009$      \\
         \bottomrule
    \end{tabular}
\end{table}

\begin{figure*}
    \centering
    \subfigure[Closed-loop game value\label{fig:selfplay-value}]{
    \hspace{-18pt}
    \includegraphics[scale=\makiescale]{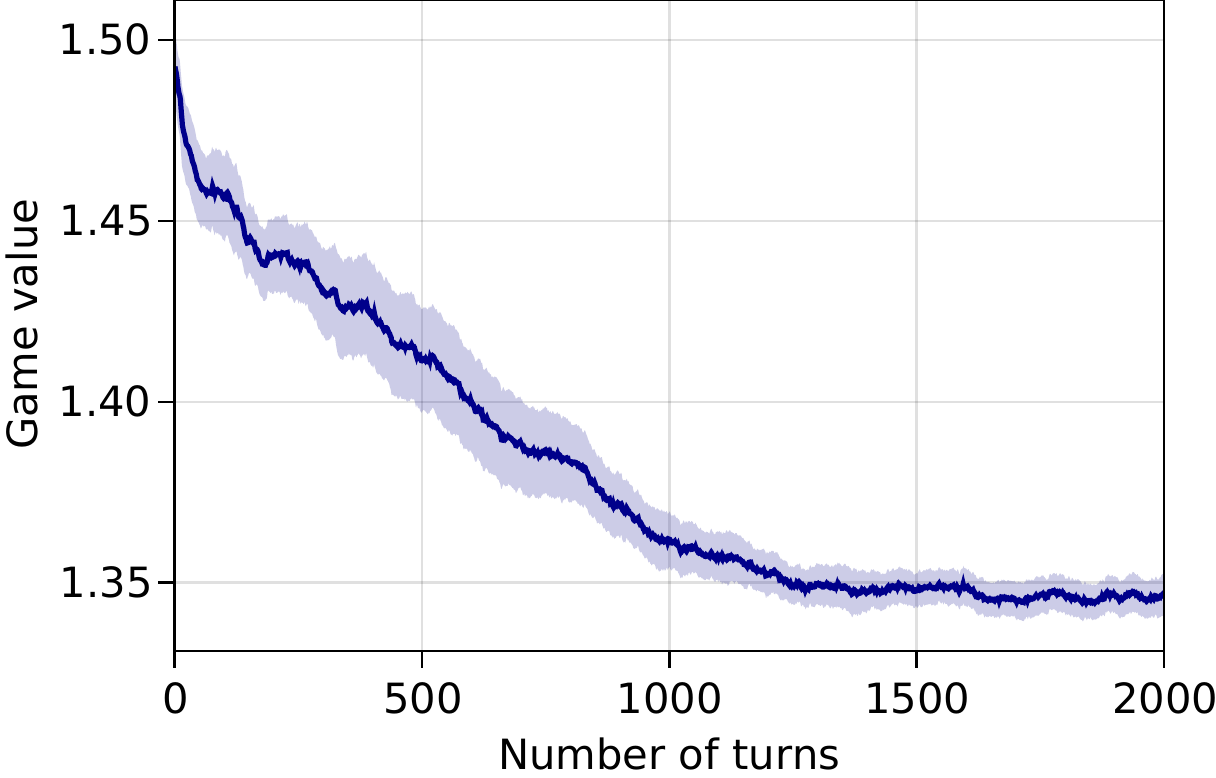}
    }
    \hspace{7.5pt}
    \subfigure[Initial strategies\label{fig:selfplay-initial-strategy}]{
    \hspace{-25pt}
    \includegraphics[scale=\makiepngscale]{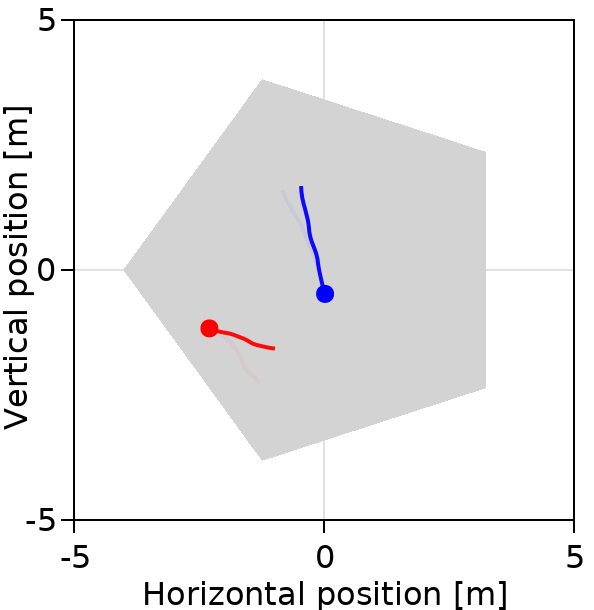}
    }
    \hspace{8pt}
    \subfigure[Intermediate strategies\label{fig:selfplay-intermediate-strategy}]{
    \hspace{-25pt}
    \includegraphics[scale=\makiepngscale]{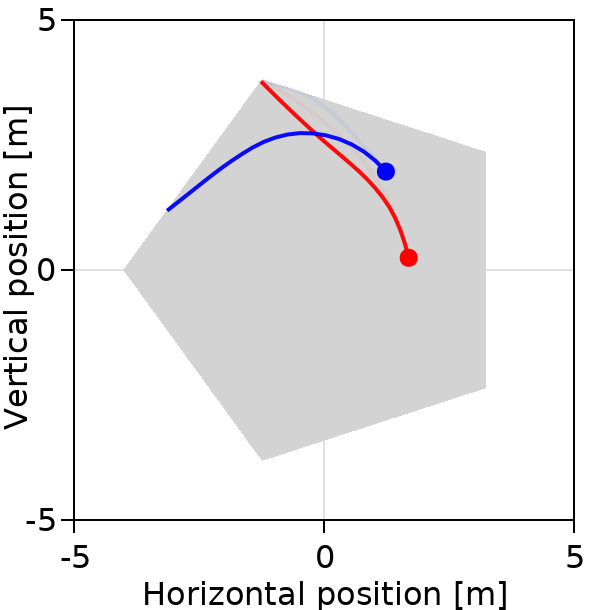}
    }
    \hspace{8pt}
    \subfigure[Final strategies\label{fig:selfplay-final-strategy}]{
    \hspace{-25pt}
    \includegraphics[scale=\makiepngscale]{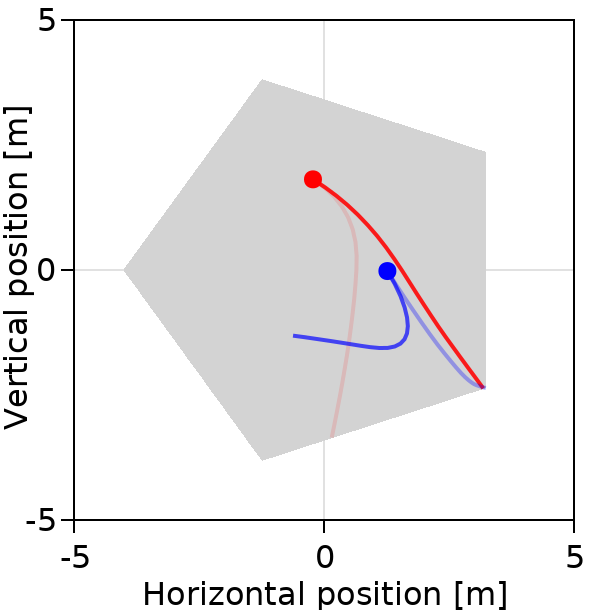}
    }
    \caption{
    Training a reference generator for the lifted game solver in \cref{fig:pipeline_overview} in simulated self-play. Transparency denotes the mixing probabilities associated with each trajectory which arise from $\mathrm{BMG}$. (a) Game value mean and SEM over a moving window of 500 turns. (b-d) Strategies at different phases of learning.
    }
    \label{fig:selfplay}
\end{figure*}

\subsubsection{Receding-Horizon Competition}
\label{sec:receding-horizon-competition}

As MPGP is naturally applied in a receding horizon fashion, we replicate the previous Monte Carlo study in that setting.

\noindent\textbf{Setup.}
For each of 5 state samples, we simulate receding-horizon competitions for all possible solver pairings.
As before, we use a planning horizon of 20 time steps for all players, and in order to simulate latency, we only allow players to update their plans every 9 time steps.
Each simulation terminates once players have updated their strategy for the $500^{\mathrm{th}}$ time.
From one such trial, we compute the game value by evaluating the pursuer's objective $f_1$ on the entire \emph{closed-loop} trajectories of both players.
Note that, in contrast to previous experiments, here we use pre-trained reference generators for all solvers as described in \cref{sec:reduced-runtime} to accelerate the computation in this large simulation.

\noindent\textbf{Discussion.}
\Cref{tab:receding_horizon_comparison} summarizes both the mean and SEM for the resultant game value in this receding-horizon Monte Carlo tournament.
Overall, we observe the same patterns as in the open-loop setting: lifting is the dominant strategy for the evader and the corresponding best response for the pursuer.
However, the game values found for this receding horizon setting are generally lower than in open-loop.
By replanning in receding horizon, the pursuer can react to the evader's decision before the distance between them grows very large.

\subsection{Learning in Receding-Horizon Self-Play}

Finally, we demonstrate that a lifted game solver with trajectory generators, as shown in \cref{fig:pipeline_overview}, can be rapidly trained from scratch in simulated self-play.

\noindent\textbf{Setup.}
We repeat the following experiment 10 times.
For each player, we randomly sample an initial state $\bx_i$ and initialize their reference generator $\pi_{\theta_i}$ with parameters $\theta_i$ sampled from a uniform distribution.
Subsequently, we simulate receding-horizon learning over 2500 turns with the lifted game solver in the loop.
In contrast to the setup used in \cref{sec:receding-horizon-competition}, here, we do \emph{not} use pre-trained reference generators.
Instead, the network parameters are updated on the fly using gradient descent.
That is, at every turn, we first perform a forward pass through the computation graph of \cref{fig:pipeline_overview} to compute a lifted strategy profile, followed by a backwards pass to compute a gradient step on each player's reference generator parameters~$\theta_i$.
For each experiment, we record players' strategies as well as the game value over a moving window of 500 turns. 

\noindent\textbf{Discussion.}
\cref{fig:selfplay} summarizes the results of lifted learning in self-play.
Initially, the untrained reference generators cause both players to move haphazardly, \cf \cref{fig:selfplay-initial-strategy}.
As learning progresses, players become more competitive, resulting in purposeful, dynamic maneuvers, \cf \cref{fig:selfplay-intermediate-strategy}.
Within approximately 1500 turns, learning converges, the game value stabilizes, and the solver has learned to generate highly competitive mixed strategies as shown in \cref{fig:selfplay-final-strategy}.

Note that, throughout the learning procedure, state and input constraints are explicitly enforced in the $\mathrm{TRAJ}$ step of the pipeline in \cref{fig:pipeline_overview}.
Moreover, since our proposed pipeline is end-to-end differentiable, it provides a strong learning signal.
Therefore, training in simulated self-play over 2500 turns can be performed in less than three minutes on a standard laptop.
Then, once the reference generators $\pi_{\theta_i}$ are fully trained, learning can be disabled, and a forward pass on the pipeline in \cref{fig:pipeline_overview} can be computed with an average run-time of \SI{2}{\milli\second}.
In summary, these results indicate that our method learns quickly and reliably, making it well-suited for online learning in real systems with embedded computational hardware.

\label{sec:selfplay-learning}

\section{Conclusion}
\label{sec:conclusion}

In this paper, we have proposed two key contributions to the field of noncooperative, multi-agent motion planning.
First, we have introduced a principled technique to reduce the online computational complexity of solving these trajectory games.
Second, we extended this approach to optimize over a richer, probabilistic class of \emph{lifted} strategies for each player.
Taken together, these innovations facilitate efficiently-computable, high-performance online trajectory planning for multiple autonomous agents in competitive settings.
Moreover, our method directly accounts for problem constraints and hence guarantees that learned trajectories satisfy these constraints whenever they are feasible.

While our formulations readily extend to games with many players and arbitrary cost structure, we demonstrate our results in a two-player, zero-sum game of tag.
We validate our approach in extensive Monte Carlo studies, in which we observe rapid and reliable convergence to solutions which outperform those which emerge in the original, non-lifted strategy space.

Finally, we showcase our approach in online learning, where each player solves lifted trajectory games in a receding time horizon.
Despite the additional complexity present in this setting---e.g., non-stationary training data and potential limit cycles---our 
method converges reliably to competitive mixed strategies.
These initial results are extremely encouraging, and future work should investigate online learning and adaptation in noncooperative settings more extensively.
In particular, we note that our method is limited to so-called \emph{open-loop} information structures, in which each agent in a trajectory game must choose future control inputs as a function only of the current state.
We believe that the incorporation of feedback structures at the trajectory-level will be an exciting direction for future research.

\section*{Acknowledgments}
This work was supported in part by the National Police of the Netherlands.
All content represents the opinion of the authors, which is not necessarily shared or endorsed by their respective employers and/or sponsors.
L. Ferranti received support from the Dutch Science Foundation NWO-TTW within the Veni project HARMONIA (nr. 18165).

\bibliographystyle{plainnat}
\bibliography{glorified,new}

\appendices

\section{Equivalence of Game \cref{eq:game_def_1} and Game \cref{eq:game_def_2}} \label{appendix:equiv}

In this section we establish an equivalence result between \cref{eq:game_def_1} and \cref{eq:game_def_2}. We prove this for a particular interpretation of the  reference variables $\xi_i$, and forms of $G_i, H_i, \mathcal{K}_i(x_i)$, noting that similar results can be established for other settings.

For each Player $i$, consider the instance of \cref{eq:traj_def} in which $G_i := \mathrm{I}$ and $H_i := 0$, representing the identity and zero matrices of appropriate dimension. Furthermore, assume that the set $\mathcal{K}_i(x_i) := \{ \tau : lb_i \leq g_i(\tau) \leq ub_i \}$, for some vector-valued and twice-differentiable function $g_i$, and lower and upper bounds $lb_i$ and $ub_i$. It is assumed that a suitable constraint qualification applies to this constraint set, such as the LICQ~\cite{nocedal2006optimizationbook}. This implies that $\xi_i$ has dimension equal to that of the decision variable $\tau$, and the objective of \cref{eq:traj_def} is to find a trajectory  $\tau\in\mathcal{K}_i(x_i)$ which is as close as possible to $\xi_i$ as measured by the $\ell_2$-norm.

\begin{theorem} \label{thm}
In the setting as stated above, \begin{enumerate}
    \item For any stationary point $(\tau_1,\tau_2)$ of Game \cref{eq:game_def_1}, there exists a stationary point $(\xi_1,\xi_2)$ of Game \cref{eq:game_def_2} such that $\tau_i = \mathrm{TRAJ}_i(\xi_i,x_i)$ for all players $i$.
    \item For any stationary point $(\xi_1,\xi_2)$ of Game \cref{eq:game_def_2} satisfying $\xi_i\in\mathcal{K}_i(x_i)$, the trajectories $\tau_i = \mathrm{TRAJ}_i(\xi_i,x_i)$ constitute a stationary point for \cref{eq:game_def_1}. 
\end{enumerate}
\end{theorem}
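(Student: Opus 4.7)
The plan is to prove both directions by identifying $\xi_i$ with $\tau_i$ and comparing the first-order optimality conditions of the two games. The structural backbone is that with $G_i = I$ and $H_i = 0$, the map $\mathrm{TRAJ}_i(\xi_i, x_i)$ is the local metric projection of $\xi_i$ onto $\mathcal{K}_i(x_i)$; in particular, whenever $\xi_i$ already lies in $\mathcal{K}_i(x_i)$, the choice $\tau = \xi_i$ is feasible and achieves zero cost, so $\mathrm{TRAJ}_i(\xi_i, x_i) = \xi_i$ as a (local) minimizer.

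For Part 1, I would set $\xi_i := \tau_i$; feasibility of $\tau_i$ in Game \cref{eq:game_def_1} then immediately yields $\mathrm{TRAJ}_i(\xi_i, x_i) = \tau_i$. Since Game \cref{eq:game_def_2} is unconstrained in $\xi_i$, its stationarity reduces to $\nabla_{\xi_i} f_i = 0$. Expanding via the chain rule through $\mathrm{TRAJ}_i$, this becomes $J_i^\mathsf{T} \nabla_{\tau_i} f_i = 0$, where $J_i$ is the sensitivity of $\mathrm{TRAJ}_i$ at $\xi_i = \tau_i$. The crux is a sensitivity analysis of the TRAJ KKT system: leveraging LICQ, one shows that the range of $J_i$ is contained in the tangent cone $T_{\mathcal{K}_i}(\tau_i)$, or equivalently that $\ker J_i^\mathsf{T}$ contains the normal cone $N_{\mathcal{K}_i}(\tau_i)$. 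By Game \cref{eq:game_def_1} stationarity, $-\nabla_{\tau_i} f_i$ is a sign-compatible combination of active constraint gradients, so $\nabla_{\tau_i} f_i \in N_{\mathcal{K}_i}(\tau_i)$, and the chain-ruled condition holds.

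For Part 2, the assumption $\xi_i \in \mathcal{K}_i(x_i)$ directly gives $\tau_i := \mathrm{TRAJ}_i(\xi_i, x_i) = \xi_i$, and Game \cref{eq:game_def_2} stationarity again reads $J_i^\mathsf{T} \nabla_{\tau_i} f_i = 0$, now evaluated at $\tau_i = \xi_i$. I would strengthen the earlier sensitivity analysis to show the equality $\ker J_i^\mathsf{T} = N_{\mathcal{K}_i}(\tau_i)$ rather than mere containment, so that $\nabla_{\tau_i} f_i$ must lie in the normal cone. Reading off the sign-compatible coefficients in this representation via LICQ yields multipliers $\lambda_i$ satisfying Game \cref{eq:game_def_1} stationarity, dual feasibility, and complementary slackness. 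The condition $\xi_i \in \mathcal{K}_i(x_i)$ is essential here: without it, the vector $\xi_i - \tau_i$ would supply spurious multiplier mass along outward normals that is not available to Game \cref{eq:game_def_1}.

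The main obstacle will be the sensitivity calculation for $J_i$ at the degenerate TRAJ solution: strict complementarity fails because all TRAJ multipliers vanish at $\xi_i = \tau_i$, so the classical implicit function theorem does not off-the-shelf produce a single linear Jacobian. I plan to handle this via B-differentiability: perturbing $\xi_i \to \tau_i + \epsilon d$ and expanding the perturbed TRAJ optimum to first order, one obtains $\tau_i(\epsilon) = \tau_i + \epsilon\, P_{T_{\mathcal{K}_i}(\tau_i)}(d) + o(\epsilon)$, where $P$ is metric projection onto the tangent cone (this uses LICQ to ensure a clean local parameterization of the active faces and the strict convexity of the TRAJ objective to ensure uniqueness of $\tau_i(\epsilon)$). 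Non-negativity of the directional derivative of $f_i \circ \mathrm{TRAJ}_i$ in every direction $d$ is then equivalent to $\nabla_{\tau_i} f_i \in N_{\mathcal{K}_i}(\tau_i)$ by cone polarity, which closes both directions of the equivalence and matches the KKT stationarity of Game \cref{eq:game_def_1}.
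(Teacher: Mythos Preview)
Your proposal is essentially correct and follows the same line as the paper's proof: set $\xi_i=\tau_i$, observe that the $\mathrm{TRAJ}_i$ multipliers vanish so the relevant cone is exactly the tangent cone $T_{\mathcal{K}_i}(\tau_i)$, identify the directional derivative of $\mathrm{TRAJ}_i$ at that point as projection onto $T_{\mathcal{K}_i}(\tau_i)$, and conclude by comparing the two stationarity conditions via cone polarity. The paper reaches the projection formula by invoking the quadratic-program characterization of the directional derivative of a parametric optimizer (Ralph--Dempe), which specializes to projection onto the critical cone and hence, since $\lambda_i=0$, onto the tangent cone; your B-differentiability perturbation argument lands at the same place.

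One cleanup: your intermediate framing in terms of a linear Jacobian $J_i$ with $J_i^\mathsf{T}\nabla_{\tau_i}f_i=0$ and $\ker J_i^\mathsf{T}\supseteq N_{\mathcal{K}_i}(\tau_i)$ should be dropped, since (as you yourself note) no single linear $J_i$ exists here and the stationarity of Game~\cref{eq:game_def_2} must be read as the \emph{inequality} $\langle \nabla_{\tau_i}f_i,\, P_{T_{\mathcal{K}_i}(\tau_i)}(d)\rangle \ge 0$ for all $d$, exactly as the paper writes it. Your final paragraph already supplies the correct nonsmooth formulation, so just lead with that and avoid the linear-Jacobian detour (and watch the sign convention: the condition is $-\nabla_{\tau_i}f_i\in N_{\mathcal{K}_i}(\tau_i)$).
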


\begin{proof}
To prove this result, we first make explicit the definition of of a stationary point for \cref{eq:game_def_1} and \cref{eq:game_def_2}. A stationary point for \cref{eq:game_def_1} is a point $(\tau_1,\tau_2)$, such that for both players $i$,
\begin{equation} 
\label{eq:nash_cond_0}
d^\mathsf{T}\nabla_{\tau_i}f_i(\tau_1,\tau_2) \geq 0, \forall d \in T_{\mathcal{K}_i}(\tau_i).
\end{equation}

Here, $T_{\mathcal{K}_i}(\tau_i)$ is the set of linearized feasible directions with respect to constraint set $\mathcal{K}_i(x_i)$ at $\tau_i$, which because we have assumed a suitable constraint qualification, is equivalent to the tangent cone at this point \cite{nocedal2006optimizationbook}. Specifically, at a feasible point $\tau$,  let $\mathcal{I}_l(\tau) := \{j: g_{i,j}(\tau) = lb_j\}$, and $\mathcal{I}_u(\tau) := \{j: g_{i,j}(\tau) = ub_j\}$. Then

\begin{equation}\label{eq:tangent_cone}
\begin{aligned}
T_{\mathcal{K}_i}(\tau) := \{ d : \  &d^\mathsf{T}\nabla g_{i,j}(\tau) \geq 0, j\in\mathcal{I}_l(\tau), \\
&d^\mathsf{T} \nabla g_{i,j}(\tau) \leq 0, j\in\mathcal{I}_u(\tau)\}
\end{aligned}
\end{equation}

A stationary point for \cref{eq:game_def_2} is a point $(\xi_1, \xi_2)$ such that
\begin{equation}
    \label{eq:nash_cond_1}
    ( \nabla_{\xi_i}\tau_i \cdot  d)^\mathsf{T}\nabla_{\tau_i}f_i(\tau_1,\tau_2) \geq 0, \forall d,
\end{equation}
where $\tau_i = \mathrm{TRAJ}_i(\xi_i, x_i)$. Note that $( \nabla_{\xi_i}\tau_i \cdot  d) := (\nabla_{\xi_i}\mathrm{TRAJ}_i(\xi_i, x_i) d)$ as appearing above is the directional derivative of $\mathrm{TRAJ}_i(\xi_i,x_i)$ with respect to changes of $\xi$ in the direction $d$. This directional derivative is defined to be $e$, where $e$ solves the following quadratic program \cite{ralph1995directional}:

\begin{equation} \label{eq:directional_qp}
    \begin{aligned}
        &\min_{e} \ \ \  \frac{1}{2} e^\mathsf{T} Q_1 e + d^\mathsf{T}Q_2 e \\
        &\text{s.t.} \ \ \ \  e\in\mathcal{C}_{\lambda_i}(\tau_i),
    \end{aligned}
\end{equation}
where $Q_1 := I - \nabla_{\tau,\tau}^2 (g(\tau_i)^\mathsf{T} \lambda_i)$, $Q_2 := -I$, $\lambda_i$ are the dual variables associated with the primal solution $\tau_i$ to $\mathrm{TRAJ}_i(\xi_i,x_i)$, and $\mathcal{C}_{\lambda_i}(\tau)$ is the critical cone to the constraint set $g_i(\tau)$ with respect to $\lambda_i$ at $\tau$:

\begin{equation}\label{eq:critical_cone}
\begin{aligned}
C_{\lambda_i}(\tau) := \{ d : \  \bar{lb}_{i,j} \leq d^\mathsf{T} \nabla g_{i,j}(\tau) \leq \bar{ub}_{i,j})\}.
\end{aligned}
\end{equation}
The bounds $\bar{lb}_{i,j}$ and $\bar{ub}_{i,j}$ are defined as:
\begin{equation}
        (\bar{lb}_{i,j},\bar{ub}_{i,j}) := \begin{cases}
            (0,0) & j\in\mathcal{I}_l(\tau) \ \& \ \lambda_{i,j} > 0\\
            (0,\infty) & j\in\mathcal{I}_l(\tau) \ \& \ \lambda_{i,j} = 0 \\
            (-\infty,\infty) & i\notin(\mathcal{I}_l(\tau)\cup\mathcal{I}_u(\tau))  \\
            (-\infty,0) & j\in\mathcal{I}_u(\tau) \ \& \ \lambda_{i,j} = 0 \\
            (0,0) & j\in\mathcal{I}_u(\tau) \ \& \ \lambda_{i,j} < 0 \\
        \end{cases}
\end{equation}




Now, to prove 1), we show that $\xi_i = \tau_i$ satisfies the claim. It follows directly that $\tau_i = \mathrm{TRAJ}_i(\xi_i, x_i)$. It can be verified that, because $\tau_i\in\mathcal{K}_i(x_i)$ by definition, then all constraints appearing in $\mathrm{TRAJ}_i$ are only weakly active, implying $\lambda_i~=~0$. This implies that the constraint set appearing in \cref{eq:directional_qp} is precisely the tangent cone \cref{eq:tangent_cone}. Therefore, for all directions $d$, $\nabla_{\xi_i}\mathrm{TRAJ}_i(\tau_i, x_i) \cdot d \in T_{\mathcal{K}_i}(\tau_i)$, which by \cref{eq:nash_cond_0}, implies that \cref{eq:nash_cond_1} holds, establishing the result. 

To prove 2), we simply note that if $\xi_i \in \mathcal{K}_i(x_i)$, then $\tau_i=\xi_i$. Furthermore, in this setting $\lambda_i=0$ as before, and therefore the critical cone appearing in \cref{eq:directional_qp} is again equivalent to the tangent cone \cref{eq:tangent_cone}.  This implies that the directional derivative $(\nabla_{\xi_i} \tau_i \cdot d)$ is defined to simply be the projection of the direction $d$ into the tangent cone at $\tau_i$. The set of all directions $d$ mapped through this projection results precisely in $T_{\mathcal{K}_i}(\tau_i)$. Therefore, the conditions \cref{eq:nash_cond_1} imply \cref{eq:nash_cond_0} for this setting, implying our result. 
\end{proof}

The result as stated in \cref{thm} does not imply that an \emph{arbitrary} stationary point found for \cref{eq:game_def_2} corresponds to a stationary point for \cref{eq:game_def_1}, since it may be that either of the references $\xi_i \notin \mathcal{K}_i(x_i)$. For such reference points,  it is possible that for some direction $d$ the expression in \cref{eq:nash_cond_1} holds with equality, yet the expression in \cref{eq:nash_cond_0} is violated. This situation results in ``sticky constraints,'' in which a descent direction exists for $f_i(\tau_1,\tau_2)$, yet that direction is not in the range of $\nabla_{\xi_i}\mathrm{TRAJ}_i(\xi_i,x_i)$, i.e. small changes to the reference are not enough to release $\tau_i$ away from the active constraint boundaries. 

To address this issue, we propose a modest regularization scheme to eliminate the possibility of reference stationary points of \cref{eq:game_def_2} which do not correspond to trajectory stationary points of \cref{eq:game_def_1}. One such approach could be to enforce constraints in \cref{eq:game_def_2} such that $\xi_i\in\mathcal{K}_i(x_i)$. This, however, would render the reformulation from \cref{eq:game_def_1} to \cref{eq:game_def_2} pointless. Instead, we impose a simple regularization in the objectives of each player in \cref{eq:game_def_2}. Namely, instead of minimizing over $f_i(\tau_1,\tau_2)$ w.r.t. $\xi_i$, we minimize over 
\begin{equation}\label{eq:regularization}
    f_i(\tau_1,\tau_2) + \|(g(\xi_i)-ub)_+ + (lb-g(\xi_i))_+ \|_2^2,
\end{equation}
where $(\cdot)_+ := \max(\cdot, 0)$.

Note that this introduced regularization is exact, and has precisely the effect of eliminating any stationary points for \cref{eq:game_def_2} in which $\xi_i\notin \mathcal{K}_i(x_i)$. If the regularization term is non-zero, then necessarily from the definition of the directional derivative \cref{eq:directional_qp}, the gradient of the regularization component is in the null-space of $\nabla_{\xi_i}\mathrm{TRAJ}_i(\xi_i,x_i)$. This implies the regularization can be driven to zero without changing the resultant solution $\tau_i$. This is true irrespective of the scale factor multiplying the regularization term. Furthermore, if $\xi_i\in\mathcal{K}_i(x_i)$, then the regularization term is zero, and has no effect on stationary points of the un-regularized game \cref{eq:game_def_2}.

We note that the particular choice of  regularization \cref{eq:regularization} is only applicable for the interpretation of the references $\xi_i$ made throughout this section. For more general parameterizations of the reference, as discussed in the main text, a suitable regularization is the norm of inequality constraint multipliers associated with the solution of $\mathrm{TRAJ}_i(\xi_i,x_i)$. The use of this dual-variable regularization is effective at eliminating the spurious stationary points for \cref{eq:game_def_2}, so long as the parameterization of the reference is rich enough such that for any $\xi_i$ and associated $\tau_i,\lambda_i$, there exists directions $d$ in which the $\xi_i$ can be perturbed and the directional derivative of $\tau_i$ is $0$, and the directional derivative of $\lambda_{i,j}$ is negative for all $j$. This is true, for example, of the control signal reference used throughout this work. 

Therefore, with use of the introduced regularization \cref{eq:regularization}, the stationary points of Games \cref{eq:game_def_1} and \cref{eq:game_def_2} have a one-to-one correspondence, warranting the use of Game \cref{eq:game_def_2} in place of Game \cref{eq:game_def_1}.

\section{Differentiating through $\mathrm{BMG}$ } \label{appendix:bmg}

The problem of finding $q_1,q_2$ which satisfy (\ref{eq:bmg}) (as is the task of the function $\mathrm{BMG}$), can be equivalently expressed as the linear complementarity problem \cite{murty1988linear} 
\begin{equation} \label{eq:lcp}
    \begin{aligned}
        \text{find} \ \  &p_1, p_2 \\
        \text{s.t.} \ \ \ p_1 \geq 0 &\perp \bar{A}p_2 \geq 1\\
        p_2 \geq 0 &\perp \bar{B}^\mathsf{T}p_1 \geq 1.
    \end{aligned}
\end{equation}
The solution $(q_1,q_2)$ to the BMG are related to the solution to \cref{eq:lcp} by the relations \begin{equation} \label{eq:q_p}
\begin{aligned}
    (q_1)_i &= \frac{(p_1)_i}{ \sum_k (p_1)_k}, & 
    (q_2)_i &= \frac{(p_2)_i}{ \sum_k (p_2)_k}.
\end{aligned}
\end{equation}
It is assumed that $\bar{A}$ and $\bar{B}$ are derived from the original matrices $A,B$, as the following. $\bar{A}_{i,j} := A_{i,j} + \alpha$, $\bar{B}_{i,j} := B_{i,j} + \beta$, for some positive constants $\alpha, \beta$ such that every element of $\bar{A}$ and $\bar{B}$ are strictly positive. Furthermore, the $1$s appearing in the right-hand side of the constraints in \cref{eq:lcp} are assumed to represent vectors of appropriate dimension with each value equal to $1$.

Consider some solution $p_1,p_2$ to $(\ref{eq:lcp})$ in which strict complementarity holds for each condition, e.g. either $p_{1,j} = 0$ or $(\bar{A}p_{2})_j = 1$, but not both. For each $j\in\{1,2\}$, Denote the index sets $\mathcal{I}_j^+ := \{i : (p_j)_i > 0\}$. Then let \begin{align*}
    p_1^+ &:= [p_1]_{\mathcal{I}_1^+}, & p_2^+ &:= [p_2]_{\mathcal{I}_2^+}, \\
    \bar{A}^+ &:= [A]_{\mathcal{I}_1^+, \mathcal{I}_2^+}, & \bar{B}^+ &:= [B]_{\mathcal{I}_1^+, \mathcal{I}_2^+}.
\end{align*}
In words, $p_1^+$ is the vector formed by only considering the non-zero elements of $p_1$, and $\bar{B}^+$ is the matrix formed by considering the columns specified by $\mathcal{I}_1^+$ and rows specified by $\mathcal{I}_2^+$. By the strict complementarity, at equilibrium, it is that
\begin{equation} \label{eq:positive_system_bmg}
    \begin{bmatrix} 
        0 & \bar{A}^+ \\ (\bar{B}^+)^\intercal & 0
    \end{bmatrix} \begin{bmatrix} p_1^+ \\ p_2^+ \end{bmatrix} = 1,
\end{equation}
where, as before, the right-hand side $1$ is a vector consisting of all $1$s.

The values $p_j^- := [p_j]_i, i\notin \mathcal{I}_j^+$ are defined to be identically~0, and as such have 0 derivative with respect to the values~$\bar{A}, \bar{B}$. The derivatives of remaining portion of the solution,~$p_1^+, p_2^+$, can be evaluated from \cref{eq:positive_system_bmg}. If the matrix on the left-hand-side of \cref{eq:positive_system_bmg} is singular, then the resulting solution is in fact non-isolated (there exist a continuum of solutions satisfying \cref{eq:lcp}), and the derivatives of the solution are not defined. If the matrix is non-singular, then necessarily so are $\bar{A}^+$ and $\bar{B}^+$, and the isolated solutions of $p_1^+,p_2^+$ are locally related to the matrices $\bar{A}$, $\bar{B}$ as
\begin{equation}\label{eq:pos_solutions}
\begin{aligned}
    p_1^+ &= (\bar{B}^+)^{-\intercal} 1, \\
    p_2^+ &= (\bar{A}^+)^{-1} 1.
\end{aligned}
\end{equation}

In this form, the derivatives of each element of $p_j^+$ can be found by differentiating through the expressions \cref{eq:pos_solutions}. Combining the above results, the derivatives of the solution vector $p_1, p_2$ with respect to the problem data $A$, $B$, can be established as the following:
\begin{equation} \label{eq:q_derivs}
\begin{aligned}
    &\frac{\partial (p_1)_i}{\partial A_{j,k}} := 0, \\
    &\frac{\partial (p_1)_i}{\partial B_{j,k}} := \begin{cases} 0 &: i \notin \mathcal{I}_1^+ \\
    -((\bar{B}^+)^{-\intercal} I_{k,j} p_1^+)_i &: \text{else}
    \end{cases}, \\
    &\frac{\partial (p_2)_i}{\partial A_{j,k}} := \begin{cases} 0 &: i \notin \mathcal{I}_2^+ \\
    -((\bar{A}^+)^{-1} I_{j,k} p_2^+)_i &: \text{else}
    \end{cases}, \\
    &\frac{\partial (p_2)_i}{\partial B_{j,k}} := 0.
\end{aligned}
\end{equation}

Above, the term $I_{j,k}$ is used to refer to the matrix consisting of zero everywhere except at the $(j,k)$-th position, which has value $1$. 

When strict complementarity does not hold at the solution to \cref{eq:lcp}, then only directional derivatives of the solution vectors exist w.r.t. the problem data. The various directional derivatives are found by, for each condition which does not hold with strict complementarity, making a selection on whether that index should be included the sets $\mathcal{I}_j^+$ or not. Then proceeding with the remainder of calculations, the result forms one of the directional derivative for the system. The directions for which this derivative is valid are defined to be those which make the directional derivative consistent with the selected index sets. 

The derivatives of the elements of~$p_1$ and~$p_2$ with respect to the cost matrices are formed by propagating the derivatives~\cref{eq:q_derivs} through the relationships~\cref{eq:q_p}.
\end{document}